\documentclass[11pt,a4paper]{article}

\usepackage[margin=1in]{geometry}
\usepackage[T1]{fontenc}
\usepackage[utf8]{inputenc}
\usepackage{lmodern,microtype}
\usepackage{amsmath,amssymb,amsthm}
\usepackage{booktabs,tabularx,array}
\usepackage{tikz}
\usetikzlibrary{arrows.meta,positioning}
\usepackage[round,authoryear]{natbib}
\usepackage{xcolor}
\usepackage{hyperref}
\hypersetup{colorlinks=true,linkcolor=blue!45!black,
 citecolor=blue!45!black,urlcolor=blue!45!black}
\newcommand{\ind}{\mathbf{1}}
\newcommand{\res}{\mathrm{res}}
\newcommand{\all}{\mathrm{all}}
\newtheorem{proposition}{Proposition}

\title{When Token Transfers Do Not Mean Liquidity in Tokenized Real-World Assets}

\author{
Emre Atay T\"umer\textsuperscript{1}
\and
Rischan Mafrur\textsuperscript{2,3}
}

\date{
\small
\textsuperscript{1}Antalya Bilim University, Antalya, Türkiye\\
\textsuperscript{2}School of Business and Economics, Universiti Brunei Darussalam, Brunei Darussalam\\
\textsuperscript{3}School of CDMS, Western Sydney University, Australia\\[4pt]
\texttt{emreatay.tumer@std.antalya.edu.tr}\\
\texttt{rischan.mafrur@ubd.edu.bn}
}

\begin{document}
\maketitle

\begin{abstract}
On-chain transfer volume is often used to describe activity in tokenized real-world assets
(RWAs), although recorded transfers may reflect very different economic functions, including
token creation, destruction, issuer-related operations, and movements between other addresses.
Ownership measures face a similar problem because a large address balance may belong to a
pool or intermediary that represents multiple investors. We address these two measurement
problems by decomposing transfer activity and making ownership coverage explicit. Transfer
turnover is calculated before and after observable event categories are separated, using the
same asset, observation window, and supply denominator. For ownership, we compare
concentration measured against total supply with concentration among the addresses that remain
after identified intermediaries are excluded. Using Ethereum-based tokenized funds from
December 2025 to August 2026, we find that transfer classification can materially change
measured turnover. We also show that removing an intermediary can lower concentration when
measured against total supply while concentration among the remaining addresses increases.
The results demonstrate that transfer activity and address concentration need to be interpreted
with their underlying transaction functions and ownership coverage in mind. They provide
evidence about liquidity-related activity, but they do not directly measure market liquidity.
The code and data used in the analysis are available at
\url{https://github.com/atytmr07/rwa-liquidity}.
\end{abstract}

\noindent\textbf{Keywords:} real-world assets; tokenization; liquidity; transfer decomposition; ownership concentration; financial intermediation

\section{Introduction}

Tokenization promises to make traditional assets easier to trade. Treasury bills, money-market funds, and private credit can now be held as tokens on a public blockchain, where every transfer is visible to anyone. It is tempting to read this activity as evidence of liquidity: the more tokens move, the more liquid the asset appears. But a token moving is not the same as a token being traded. When a fund issues new shares, pays interest, or redeems an investor, the blockchain records a transfer just as it does when two investors trade. In our August 2026 observation for USYC, for example, 95.5\% of recorded transfer
volume falls outside the residual category once creation-like, destruction-like, and
issuer-linked events are separated. This does not imply that the remaining transfers are
secondary-market trades; rather, it shows how strongly a transfer-based activity measure can
depend on the economic function assigned to recorded movements.

A related problem arises in ownership measurement. A blockchain identifies addresses, not
beneficial owners. A single lending pool or custodial contract may hold tokens on behalf of
many investors, while one investor may control several addresses. Excluding a known
intermediary from an address-level concentration measure therefore changes both the observed
ownership distribution and the share of outstanding supply represented by that distribution.
A lower concentration statistic after such an exclusion does not necessarily imply more
dispersed ownership among the remaining addresses.

These measurement questions connect to established research on liquidity and intermediation. Bid--ask spreads and price impact describe the costs of executing transactions \citep{amihud1986,glosten1985,amihud2002}. Turnover captures trading intensity and has been used as a liquidity proxy \citep{datar1998}. Search frictions, intermediary capacity, and funding conditions also affect the ability to transact \citep{duffie2005,brunnermeier2009}. Applying these concepts to tokenized assets requires an initial distinction between token movements, market transactions, and contractual investor exit.


This paper focuses on two measurement issues in the use of blockchain data for tokenized
assets. One is the interpretation of transfer activity. We compare total transfer turnover with
the turnover that remains after specified event categories are separated, using the same
contract, observation window, and supply denominator. Any difference between the two measures
therefore comes from the event classification itself. The other issue concerns ownership
concentration when intermediary addresses are excluded. We report the share of total supply
represented by the retained addresses together with the concentration of balances within that
group. This allows us to separate changes in ownership coverage from changes in the
distribution of balances among the addresses that remain.

The framework extends the aggregate evidence in \citet{mafrur2026}, which examines transfer activity and participation across Ethereum-based real-world assets and recognises the inclusion of issuance and operational movements in transfer measures. This study develops an event-level treatment of that measurement problem and adds a formal analysis of intermediary holdings. 

\section{Related literature}

\subsection{Liquidity, trading activity, and intermediation}

Liquidity is multidimensional. \citet{amihud1986} relates transaction costs to asset valuation, while \citet{glosten1985} shows how asymmetric information can generate bid--ask spreads. Empirical proxies capture different aspects of this concept. \citet{datar1998} studies turnover as a liquidity proxy, and \citet{amihud2002} measures illiquidity through the relationship between absolute returns and trading volume. The use of token transfers introduces an additional classification task because the observed volume can contain transactions outside secondary-market exchange.

Market organisation also determines the conditions under which investors can trade. \citet{duffie2005} analyses search and bargaining in over-the-counter markets. \citet{brunnermeier2009} connects market liquidity to the funding conditions faced by intermediaries. These contributions motivate attention to the mechanisms supporting exchange, beyond the frequency of recorded movements. In a tokenized market, transferability, access to counterparties, and executable trading conditions remain separate empirical objects.

Redemption provides another route to investor exit. Evidence from corporate bond funds shows that portfolio illiquidity and investor flows can interact in ways that create fragility \citep{goldstein2017}. For tokenized funds, an assessment of exit liquidity therefore requires information about eligibility, fees, settlement timing, and restrictions on redemption. Low transfer turnover can coexist with an effective redemption mechanism, while frequent token movements can occur without deep secondary markets.

\subsection{Tokenization and financial market structure}

Blockchain research examines changes in recordkeeping, governance, and contracting. \citet{yermack2017} discusses the implications of distributed ledgers for corporate governance, and \citet{cong2019} analyses how blockchain consensus and smart contracts affect economic organisation. These mechanisms can alter the administration and transfer of financial claims. Their effects on liquidity depend on the surrounding market structure.

Security tokens also retain institutional and legal features that shape investor access. \citet{lambert2022} studies security token offerings and their financing characteristics. The BIS and CPMI emphasise the arrangements governing tokenized assets throughout their life cycle \citep{bis2024}. Within decentralised finance, composability permits assets to enter lending protocols, pools, and other contractual arrangements \citep{schar2021}. Such uses generate observable transfers and concentrated contract balances, making institutional identification central to empirical measurement.

Empirical studies of tokenized real estate provide direct evidence on ownership and trading. \citet{swinkels2023} examines ownership distributions and liquidity across real estate tokens. \citet{kreppmeier2023} investigates security token offerings and subsequent secondary-market activity. Together, these studies motivate analysis at the level of the instrument and its trading arrangements. Tokenized products differ in how, and whether, they trade.

\subsection{Positioning within RWA measurement research}

\citet{mafrur2026} analyses nine Ethereum-based assets over December 2025--May 2026 using monthly transfer and participation measures. That study establishes an empirical baseline across asset categories and discusses the ambiguity of transfer-based liquidity proxies. Our framework addresses the construction of those proxies by decomposing activity within each asset--window observation and making intermediary coverage explicit.

The closest transaction-level study is \citet{luo2026}, who examine tokenized U.S. Treasuries
through functional transaction analysis and address-role inference. Their classification
distinguishes issuance, redemption, transfers, and bridging. We use event classification for a
different purpose. Our analysis measures how much a transfer-based turnover indicator changes
when specified event categories are separated while the asset and observation window remain
fixed. We also examine how removing identified intermediary addresses affects ownership
coverage and concentration. In this way, the paper focuses on how transaction classification
and intermediary treatment change the interpretation of commonly used on-chain indicators.

\section{Conceptual framework and data structure}

\subsection{Units of observation}

The framework uses three linked units of observation: the transfer event, the address balance, and the asset--window. An asset is identified by its blockchain and contract address. A transfer event records the transaction hash, log index, block, timestamp, sender, recipient, and quantity. An address balance records holdings at a specified block. The asset--window links transfers over a defined interval to a consistent supply or valuation denominator.

These units support two complementary analyses. The activity analysis assigns an economic interpretation to observable event patterns and measures the volume remaining under specified classifications. The ownership analysis evaluates the distribution of balances and the proportion of supply represented after intermediary treatment. Figure~\ref{fig:framework} presents their integration.

\begin{figure}[htbp]
\centering
\begin{tikzpicture}[
 box/.style={draw,rounded corners,align=center,font=\small,
 text width=5.3cm,minimum height=1cm,inner sep=5pt},
 arrow/.style={-{Latex[length=2mm]},thick},
 node distance=0.65cm and 0.7cm]
\node[box] (events) {Transfer events\\Event conventions and configured roles};
\node[box,right=of events] (balances) {Address balances\\Supply and intermediary holdings};
\node[box,below=of events] (activity) {Activity decomposition\\Total and residual turnover};
\node[box,below=of balances] (ownership) {Ownership decomposition\\Coverage and conditional concentration};
\node[box,below=1cm of activity,xshift=3.15cm,text width=8cm] (interpretation)
 {Integrated interpretation\\Circulation, participation, and investor exit};
\draw[arrow] (events)--(activity);
\draw[arrow] (balances)--(ownership);
\draw[arrow] (activity)--(interpretation);
\draw[arrow] (ownership)--(interpretation);
\end{tikzpicture}
\caption{Structure of the proposed measurement framework. Event classification and ownership coverage provide complementary information for interpreting tokenized-asset activity.}
\label{fig:framework}
\end{figure}

\subsection{Event interpretation}

The framework distinguishes observable address conventions from economic
attribution. ERC-20 recommends that token creation be recorded by a transfer
from the zero address, while no corresponding burn convention is prescribed
\citep{erc20}. Some contracts instead use designated sink addresses.
The implementation therefore treats the zero address and the configured sink
as boundary addresses for event classification. The resulting categories are
descriptive labels; claims about actual issuance, cancellation, or redemption
require additional evidence from contract behaviour or issuer documentation.

The implementation assigns events to five mutually exclusive categories. Let
$\mathcal{B}$ denote the set of designated boundary addresses, consisting of the
zero address and the configured sink address
\texttt{0x000000000000000000000000000000000000dEaD}. An event with both
endpoints in $\mathcal{B}$ is classified as unresolved. Otherwise, an event
originating in $\mathcal{B}$ is classified as creation-like, while an event
terminating in $\mathcal{B}$ is classified as destruction-like.

These labels describe the implemented address convention and do not by
themselves establish economic issuance or cancellation. In particular, the
creation-like category identifies transfers originating from a configured
boundary address; verification of actual token creation requires examination
of the relevant contract behaviour and supply changes.

These rules precede the issuer rule. An event that has not already been
classified as creation-like, destruction-like, or unresolved is classified as
issuer-linked when at least one endpoint belongs to the asset's configured
issuer-address list. An address enters that list only when the asset's own
issuer documents it in a published list of contracts or it carries the
issuer's label on Etherscan; transaction behaviour alone does not qualify an
address. All remaining events are classified as residual.

Table~\ref{tab:taxonomy} summarises the classification. An issuer-linked movement does not, by itself, establish a subscription or cash redemption. Similarly, residual transfers may record investor exchange, custody movements, collateral transfers, or wallet reorganisation. The unresolved category captures conflicts in the implemented rules; it does not encompass every event whose economic purpose remains unknown.

\begin{table}[htbp]
\centering
\small
\caption{Implemented event categories and limits of interpretation}
\label{tab:taxonomy}
\begin{tabularx}{\textwidth}{@{}p{2.6cm}XX@{}}
\toprule
Category & Observable rule & Interpretation \\
\midrule
Creation-like
& Sender in $\mathcal{B}$, recipient outside $\mathcal{B}$
& Boundary-address outflow; economic issuance and supply effect require verification \\

Destruction-like
& Recipient in $\mathcal{B}$, sender outside $\mathcal{B}$
& Boundary-address inflow; cancellation or redemption is not established \\

Issuer-linked
& Neither endpoint in $\mathcal{B}$; at least one endpoint in the configured issuer list
& Movement involving an address documented or labelled by the asset's own issuer, such as a redemption contract \\

Residual
& No preceding rule applies
& Remaining transfer activity, potentially including operational movements \\

Unresolved
& Both endpoints in $\mathcal{B}$
& Event crosses between configured boundary addresses and cannot be assigned uniquely to the preceding categories \\
\bottomrule
\end{tabularx}
\end{table}

\subsection{Data scope and validation}

Our panel contains 144 asset--month rows for 16 Ethereum contracts over December 2025--August 2026. PAXG observations are missing because the collection procedure exceeded its 250,000-log ceiling. XAUT, which is part of the sample in \citet{mafrur2026}, lies beyond the same ceiling and is not included; both gold tokens would require an indexed data source. Of the remaining 135 observations, 18 concern USDM and STBT, whose reconstructed balances failed the read-block supply check. Their ownership measures are therefore withheld in the exported panel. Their turnover denominators also depend on the failed reconstruction, so those observations are not used in the empirical analysis.

The focal empirical analysis concentrates on BUIDL, USYC, OUSG, and FDIT. These four tokenized funds satisfy the data-quality conditions required for the matched comparisons developed in Section~\ref{sec:methodology}: their transfer reconstruction is complete, their closing balance distributions pass the read-block supply reconciliation, and no dropped transfers are recorded. Restricting the focal analysis to observations satisfying these conditions reduces the risk that differences between total and residual turnover, or between alternative concentration measures, are driven by incomplete event coverage or failed balance reconstruction.

We use calendar months with UTC boundaries. The August window is
$(\text{1 August 2026},\text{1 September 2026})$, covering 31 days and Ethereum
blocks 25,656,293--25,878,704, inclusive. The collection is pinned to block
26,086,416, timestamped 29 September 2026 at 23:29:23 UTC. Each transfer event is identified by its contract address, transaction hash, and
log index, so that no event is counted twice. Transfer and balance data are obtained from Ethereum JSON-RPC. The panel is produced by a dedicated \texttt{paper}
command that applies the five-category classification in
Table~\ref{tab:taxonomy}. We validate each reconstructed ledger by comparing the sum of address balances
with the contract's \texttt{totalSupply()} at the read block.

\section{Measurement methodology}
\label{sec:methodology}
\subsection{Transfer decomposition}

Let $E_{it}$ denote the observed events for asset $i$ in the interval $[t_0,t_1)$, and let $q_e\geq0$ denote the quantity transferred in event $e$. Partition the events into creation-like, destruction-like, issuer-linked, residual, and unresolved categories, indexed by $k\in\{m,b,o,r,u\}$. Category volume is
\begin{equation}
 V^k_{it}=\sum_{e\in E_{it}}q_e\ind\{c_e=k\},
\end{equation}
where $c_e$ denotes the event category. Total transfer volume satisfies
\begin{equation}
 V^{\all}_{it}=V^m_{it}+V^b_{it}+V^o_{it}+V^r_{it}+V^u_{it}.
 \label{eq:partition}
\end{equation}
This is a decomposition of recorded movements. A sequence of transfers associated with one economic transaction may contribute several events to the total.

For positive end-of-window supply $S_{it}$, define
\begin{equation}
 T^{\all}_{it}=\frac{V^{\all}_{it}}{S_{it}},
 \qquad
 T^{\res}_{it}=\frac{V^r_{it}}{S_{it}}.
 \label{eq:turnover}
\end{equation}
Using the same denominator isolates the classification effect on the numerator. These measures describe turnover in native token units. Dollar-based turnover requires compatible event valuations and an asset-value denominator. The two constructions need not coincide when prices or net asset values change during the observation window.

For positive residual volume, the classification factor is
\begin{equation}
 F_{it}=\frac{T^{\all}_{it}}{T^{\res}_{it}}
       =\frac{V^{\all}_{it}}{V^r_{it}}.
 \label{eq:factor}
\end{equation}
For positive total volume, the share outside the residual category is
\begin{equation}
 X_{it}=1-\frac{V^r_{it}}{V^{\all}_{it}}.
 \label{eq:excluded}
\end{equation}
A high $F_{it}$ indicates that the observed turnover measure is sensitive to event composition. The category shares $V^k_{it}/V^{\all}_{it}$ identify the source of that sensitivity. Unresolved volume is reported separately so that incomplete attribution remains visible.

If residual volume is zero and total volume is positive, $X_{it}=1$ and the classification factor is undefined. If both are zero, the observation records no detected movement over a completely observed window. Missing event coverage is a separate data condition. If total volume is zero, $X_{it}$ is also undefined. Positive self-transfers remain in transfer volume under the implemented convention, although they are excluded from participation below; a sensitivity analysis can remove them from both measures.

\subsection{Participation and balance dormancy}

Participation measures depend on the population used in the denominator. Let $\mathcal{H}_{it}$ be the set of addresses with positive balances at the window end. Define the set of addresses participating in positive-value residual transfers as
\begin{equation}
 \mathcal{A}_{it}=\left\{a:\exists e\in E_{it},\ c_e=r,\ q_e>0,
 f_e\neq d_e,\ a\in\{f_e,d_e\}\right\},
\end{equation}
where $f_e$ and $d_e$ denote sender and recipient. Null and designated burn addresses are excluded. The positive-value and non-self-transfer conditions prevent zero-value messages and self-transfers from establishing participation. 

The active share of end-of-window holders is
\begin{equation}
 P_{it}=\frac{|\mathcal{A}_{it}\cap\mathcal{H}_{it}|}
 {|\mathcal{H}_{it}|}.
\end{equation}
This bounded measure differs from $|\mathcal{A}_{it}|/|\mathcal{H}_{it}|$, which can exceed one when transacting addresses exit before the end of the window. Both quantities can be informative, provided their populations are stated. We report $P_{it}$ only for a valid closing holder distribution with $|\mathcal{H}_{it}|>0$; otherwise it is undefined.

For validated balances $b_{ait}$ summing to supply, balance dormancy is
\begin{equation}
 D_{it}=\frac{\sum_{a\in\mathcal{H}_{it}}b_{ait}
 \ind\{a\notin\mathcal{A}_{it}\}}{S_{it}}.
\end{equation}
This statistic measures the share of closing balances held at addresses without observed residual participation. It is sensitive to the observation window and to the visibility of activity within pooled arrangements. We calculate participation and dormancy over the full closing holder population; the intermediary exclusions introduced below apply to retained concentration, not to these activity measures.

\subsection{Concentration and retained ownership coverage}

Let $s_{ait}=b_{ait}/S_{it}$ be the share of outstanding supply held by address $a$. For a complete and consistent balance distribution, address concentration is
\begin{equation}
 H^{\all}_{it}=10{,}000\sum_a s_{ait}^{2}.
\end{equation}
Let $\mathcal{P}_{it}$ denote the set of intermediary addresses excluded from a particular ownership analysis. The share of supply retained is
\begin{equation}
 C_{it}=\sum_{a\notin\mathcal{P}_{it}}s_{ait}.
 \label{eq:coverage}
\end{equation}
The contribution of retained addresses to full-supply concentration is
\begin{equation}
 H^{\mathrm{ret}}_{it}=10{,}000
 \sum_{a\notin\mathcal{P}_{it}}s_{ait}^{2}.
 \label{eq:retained}
\end{equation}
For $C_{it}>0$, concentration conditional on retained holdings is
\begin{equation}
 H^{\mathrm{cond}}_{it}=10{,}000
 \sum_{a\notin\mathcal{P}_{it}}
 \left(\frac{s_{ait}}{C_{it}}\right)^2.
 \label{eq:conditional}
\end{equation}

\begin{proposition}[Coverage decomposition]
For any non-empty retained balance distribution with $C_{it}>0$,
\begin{equation}
 H^{\mathrm{ret}}_{it}=C_{it}^{2}H^{\mathrm{cond}}_{it}.
 \label{eq:identity}
\end{equation}
For two observations with positive coverage and concentration,
\begin{equation}
 \Delta\log H^{\mathrm{ret}}_{it}
 =2\Delta\log C_{it}+\Delta\log H^{\mathrm{cond}}_{it}.
 \label{eq:change}
\end{equation}
\end{proposition}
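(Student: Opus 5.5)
The identity \eqref{eq:identity} follows directly from the definitions \eqref{eq:retained} and \eqref{eq:conditional}, so I plan a purely algebraic argument. Fix an asset--window observation with $C_{it}>0$. Since $C_{it}$ does not depend on the summation index $a$, each term of \eqref{eq:conditional} can be written as $(s_{ait}/C_{it})^2 = s_{ait}^2/C_{it}^2$. The factor $C_{it}^{-2}$ can then be pulled out of the sum over $a\notin\mathcal{P}_{it}$, which gives
\begin{equation*}
 H^{\mathrm{cond}}_{it}=\frac{10{,}000}{C_{it}^{2}}\sum_{a\notin\mathcal{P}_{it}}s_{ait}^{2}
 =\frac{H^{\mathrm{ret}}_{it}}{C_{it}^{2}}.
\end{equation*}
Multiplying through by $C_{it}^2>0$ yields \eqref{eq:identity}. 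The hypothesis $C_{it}>0$ is used only to make $H^{\mathrm{cond}}_{it}$ well defined and to allow the division. Non-emptiness of the retained set is implied by $C_{it}>0$, since an empty sum would give coverage zero.

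For \eqref{eq:change}, I will take logarithms of \eqref{eq:identity} for each of the two observations. This requires all three quantities to be strictly positive. $C_{it}>0$ is assumed, and $H^{\mathrm{cond}}_{it}>0$ holds because $C_{it}>0$ forces at least one retained $s_{ait}>0$. Hence $H^{\mathrm{ret}}_{it}>0$ as well; this is consistent with the stated hypothesis of positive concentration. The logarithm then gives
\begin{equation*}
 \log H^{\mathrm{ret}}_{it}=2\log C_{it}+\log H^{\mathrm{cond}}_{it}
\end{equation*}
at each observation. Subtracting the two instances, with $\Delta$ denoting the difference between them (for example across windows, or between two exclusion sets $\mathcal{P}$), gives \eqref{eq:change} by linearity of differencing.

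There is no substantive obstacle; the proof is purely algebraic. The only point needing care is to check positivity before taking logs. Shares satisfy $s_{ait}\ge 0$ because validated balances are nonnegative, so $C_{it}>0$ guarantees a strictly positive summand. I will also note that the two observations may differ in $\mathcal{P}$, in the supply denominator, or in the window, since the identity holds pointwise for each. This is what allows \eqref{eq:change} to separate a coverage effect from a within-group distribution effect when an intermediary is removed.
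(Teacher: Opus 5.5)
Your proof is correct and follows the paper's argument: factor $C_{it}^{-2}$ out of the sum in Equation~\eqref{eq:conditional} to get $H^{\mathrm{cond}}_{it}=H^{\mathrm{ret}}_{it}/C_{it}^{2}$, then take logarithms and difference across the two observations. Your explicit positivity check before taking logs is a small addition that the paper leaves implicit: $C_{it}>0$ forces a nonzero retained share, hence $H^{\mathrm{ret}}_{it}>0$ and $H^{\mathrm{cond}}_{it}>0$.
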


\begin{proof}
Factoring $C_{it}^{-2}$ out of the sum in Equation~\eqref{eq:conditional} gives $H^{\mathrm{cond}}_{it}=H^{\mathrm{ret}}_{it}/C_{it}^{2}$. Taking logarithms and differences yields Equation~\eqref{eq:change}.
\end{proof}

The decomposition separates two sources of movement in the retained-address statistic. The first is a change in the fraction of supply represented by the retained addresses. The second is a change in the distribution of balances within that population. A fall in retained-address concentration can therefore coincide with increasing conditional concentration when retained coverage declines sufficiently.

The interpretation of top-$k$ shares follows the same principle. Let $K^{\mathrm{ret}}_{k,it}$ be the sum of the largest $k$ retained address balances divided by total supply, with addresses ranked again after exclusion. The corresponding conditional share is
\begin{equation}
 K^{\mathrm{cond}}_{k,it}=\frac{K^{\mathrm{ret}}_{k,it}}{C_{it}}.
\end{equation}
Reporting coverage alongside both concentration conventions makes the denominator choice explicit. None of these address-level measures resolves beneficial ownership within an excluded pool or common ownership across retained addresses.

When retained coverage falls below one half, the retained addresses represent a minority of supply, and conditional measures describe that minority among itself. The panel therefore withholds $H^{\mathrm{cond}}_{it}$ and $K^{\mathrm{cond}}_{k,it}$ when $C_{it}<0.5$ and reports them as not assessable. In our panel this applies to every ZTLN observation, where a single retained address holds one third of supply after a pooling contract is excluded, and to USYC in December 2025, when coverage is 38.4\%.

\subsection{Matched comparisons and classification sensitivity}

The primary comparison is within an asset--window. Contract identity, event coverage, observation boundaries, and the supply denominator remain fixed while the event classification changes. A baseline specification removes events meeting the creation-like and destruction-like rules. An extended specification additionally separates documented issuer and operational movements. A further specification can incorporate independently identified custody, bridge, or collateral activity.

The difference between two specifications quantifies sensitivity to the additional classification rule. Address labels require evidence and an applicable time interval, since an address can change function. The register contains explanatory notes but does not implement time-varying role assignments; applying those labels retrospectively therefore imposes an additional assumption. At the ownership level, sensitivity is assessed by reporting $C_{it}$, $H^{\mathrm{ret}}_{it}$, and $H^{\mathrm{cond}}_{it}$ for each intermediary treatment. These comparisons identify measurement effects without assigning a causal interpretation to differences between assets.

\section{Empirical Analysis}
\subsection{Sensitivity of transfer turnover}

We begin with four tokenized funds for which transfer reconstruction is complete and the
closing balance distribution passes the read-block supply reconciliation described in
Section~3.3. Table~\ref{tab:turnover} reports their August 2026 observations.
Within each asset--month, total and residual turnover use the same observation interval and
closing-supply denominator. Differences between the two measures therefore arise from the
event classification applied to the numerator. Issuer-linked events follow the rule in Section~3.2. For OUSG the list contains five contracts that Ondo Finance publishes for subscriptions and redemptions \citep{ondoaddresses}; for USYC it contains the two Teller contracts currently published by Circle
together with two earlier Teller contracts documented in archived versions of
the same smart-contract page \citep{circleusyc}. No documented or labelled issuer address was found for BUIDL or FDIT. USYC's Tellers take part in no August transfer, so issuer-linked volume is non-zero only for OUSG. Unresolved volume is zero for all four funds.

\begin{table}[htbp]
\centering
\small
\caption{Transfer-turnover decomposition, August 2026}
\label{tab:turnover}
\begin{tabular}{@{}lrrrr@{}}
\toprule
Asset & $T^{\all}$ & $T^{\res}$ & $F$ & $X$ (\%) \\
\midrule
BUIDL & 0.1940 & 0.1044 & 1.9 & 46.2 \\
USYC & 10.9039 & 0.4927 & 22.1 & 95.5 \\
OUSG & 0.5286 & 0.0563 & 9.4 & 89.4 \\
FDIT & 1.0478 & 0.6939 & 1.5 & 33.8 \\
\bottomrule
\end{tabular}

\vspace{0.5em}
\begin{minipage}{0.97\textwidth}
\footnotesize
All observations cover $(1\text{ August}-1\text{ September})$ 2026 in UTC. Turnover is native-token transfer volume divided by reconstructed closing supply, expressed as a ratio. Residual volume excludes creation-like, destruction-like, and issuer-linked events. $F=V^{\all}/V^r$ and $X=100(1-V^r/V^{\all})$. Values are calculated from unrounded panel entries and rounded only for display. Source: authors' calculations from Ethereum data (panel.csv in the replication package); the limitations of historical reconstruction are described in Section~3.3.
\end{minipage}
\end{table}

BUIDL's total turnover of 0.1940 falls to 0.1044 after the baseline exclusions, which account for 46.2\% of recorded volume. USYC has the largest classification factor among these four observations: total turnover of 10.9039 and residual turnover of 0.4927 imply a factor of 22.1, with 95.5\% of volume outside the residual category. The corresponding factors for OUSG and FDIT are 9.4 and 1.5. These differences describe the composition of token movements, not differences in execution costs or market depth.

OUSG provides the clearest example of the effect of issuer-linked classification. The verified source of Ondo's instant manager implements redemption by transferring OUSG from the investor to the manager and subsequently burning it \citep{ousgmanager}. Under the issuer rule the first leg is issuer-linked and the second destruction-like; under a baseline that removes only creation-like and destruction-like events, the first leg would remain residual. Ondo also lists an OUSG recipient address for manual redemptions \citep{ondoaddresses}. Under that baseline, OUSG's August residual turnover would be 0.2756, with $F=1.9$ and $X=47.9\%$. Classifying the documented addresses as issuer-linked reduces residual turnover to 0.0563 and raises $F$ to 9.4 and $X$ to 89.4\%. In August, the two managers account for 55.3\% and the recipient for 24.2\% of baseline residual volume; over December 2025--August 2026 the shares are 31.0\% and 55.5\%.

Table~\ref{tab:issuer} applies the same comparison to every asset in
the panel with at least one observed issuer-linked event during the sample
period. Besides OUSG and USYC, these are USTB, whose RedemptionIdle contract Superstate publishes \citep{superstate}; mTBILL, whose issuance and redemption vaults Midas publishes together with three further redemption vaults listed in the same Midas registry; TBILL, whose own token contract both receives TBILL from holders and burns it; and CANA, whose distribution address carries the issuer's Etherscan label. 

\begin{table}[htbp]
\centering
\caption{Issuer-linked classification effects}
\label{tab:issuer}
\small
\begin{tabular}{lrrrr}
\toprule
Asset &
Issuer-linked events &
Baseline residual removed &
August $X_{\mathrm{base}}$ &
August $X$ \\
&
Dec.~2025--Aug.~2026 &
Dec.~2025--Aug.~2026 (\%) &
(\%) &
(\%) \\
\midrule
OUSG   & 185   & 86.5 & 47.9 & 89.4 \\
TBILL  & 117   & 41.8 & 56.4 & 78.7 \\
USTB   & 4,724 & 26.5 & 70.2 & 80.8 \\
mTBILL & 730   & 7.1  & 78.3 & 78.3 \\
USYC   & 11    & 0.0  & 95.5 & 95.5 \\
CANA   & 9     & 0.0  & 0.0  & 0.0  \\
\bottomrule
\end{tabular}

\begin{minipage}{0.95\textwidth}
\footnotesize
Issuer-linked event counts and the share of baseline residual volume removed
by the issuer-linked rule are calculated over December 2025--August 2026. The baseline
classification removes creation-like and destruction-like events but does not separately
classify issuer-linked movements. The final two columns report $X$ for August 2026 under the
baseline and extended classifications, respectively. Shares below 0.05\% are displayed as
0.0\%. Source: authors' calculations from \texttt{paper/panel.csv}.
\end{minipage}
\end{table}

The classification effect also varies materially over time. BUIDL's August factor of 1.9 lies close to its nine-month median of 2.1, but the factor reached 12.6 in July 2026. Table~\ref{tab:months} reports the median and range of $X$ and $F$ across the nine months for the four funds. $X$ remains between 0 and 100\%, whereas $F$ becomes large whenever residual volume is small relative to total volume, as OUSG's maximum of 196.4 shows. $X$ is therefore the more stable summary of event composition.

\begin{table}[htbp]
\centering
\small
\caption{Event composition across December 2025--August 2026}
\label{tab:months}
\begin{tabular}{@{}lrrrrr@{}}
\toprule
Asset & Months & Median $X$ (\%) & Range $X$ (\%) & Median $F$ & Range $F$ \\
\midrule
BUIDL & 9 & 51.9 & 43.5--92.0 & 2.1 & 1.8--12.6 \\
USYC & 9 & 74.4 & 65.8--95.5 & 3.9 & 2.9--22.1 \\
OUSG & 9 & 97.5 & 77.3--99.5 & 39.7 & 4.4--196.4 \\
FDIT & 9 & 48.9 & 5.2--65.1 & 2.0 & 1.1--2.9 \\
\bottomrule
\end{tabular}

\vspace{0.5em}
\begin{minipage}{0.97\textwidth}
\footnotesize
Months counts calendar months with positive transfer volume. Medians and ranges are taken over the months in which each ratio is defined. The source is the \texttt{paper/panel.csv}.
\end{minipage}
\end{table}

\subsection{Intermediary exclusion and concentration}

At the end of August 2026, the identified Flux Finance fOUSG contract holds 27.9\% of OUSG's reconstructed supply. Excluding it reduces the top-ten share from 96.8\% to 69.6\% and the HHI from 1,680 to 901 when total supply remains the denominator. The retained addresses cover 72.1\% of supply. Conditional on their own holdings, their top-ten share is 96.6\% and their HHI is 1,734. The HHI reduction equals the excluded address's contribution; the change in the top-ten share also reflects the entry of the next-ranked retained address. Neither change identifies the ownership distribution inside the vault.

\begin{table}[htbp]
\centering
\small
\caption{OUSG concentration after excluding the fOUSG contract}
\label{tab:coverage}
\begin{tabular}{@{}lrrrrr@{}}
\toprule
Month end & $C$ (\%) & $K^{\mathrm{ret}}_{10}$ (\%) & $K^{\mathrm{cond}}_{10}$ (\%) & $H^{\mathrm{ret}}$ & $H^{\mathrm{cond}}$ \\
\midrule
December 2025 & 94.6 & 72.0 & 76.1 & 654 & 731 \\
August 2026 & 72.1 & 69.6 & 96.6 & 901 & 1,734 \\
\bottomrule
\end{tabular}

\vspace{0.5em}
\begin{minipage}{0.97\textwidth}
\footnotesize
Closing balances are reconstructed at each month end. The same fOUSG contract is excluded in both observations. $C$ is retained supply coverage; conditional measures use retained balances as their denominator. HHI is on the 0--10,000 scale. Identities hold for unrounded values, so ratios reconstructed from the rounded entries may differ slightly.
\end{minipage}
\end{table}

Table~\ref{tab:coverage} shows why the denominator matters over time. From December 2025 to August 2026, coverage declines from 94.6\% to 72.1\%. The retained top-ten share of total supply falls from 72.0\% to 69.6\%, whereas the corresponding share within retained holdings rises from 76.1\% to 96.6\%. Thus, the full-supply statistic alone would obscure increasing concentration among retained addresses. This is an accounting comparison and does not establish that the vault's growth caused the change in the remaining distribution.

A numerical example isolates the same effect. Suppose a pooled contract holds two-thirds of supply and one direct address holds the remaining third. Address concentration is
\begin{equation}
 H^{\all}=10{,}000\left[\left(\frac{2}{3}\right)^2+
 \left(\frac{1}{3}\right)^2\right]=\frac{50{,}000}{9}\approx5{,}555.6.
\end{equation}
After removing the pool, retained concentration is $10{,}000/9\approx1{,}111.1$ and coverage is $1/3$. Conditional concentration is $10{,}000$, because one address holds all retained balances. The pooled balance's underlying ownership remains unspecified. Equation~\eqref{eq:change} generalises this denominator effect across observation dates.

\section{Discussion}

\subsection{Implications for liquidity measurement}

Before measuring turnover, one has to know what each transfer is. Total turnover describes the intensity of observed token movements. Residual turnover describes circulation outside identified excluded categories. Execution liquidity additionally depends on prices, transaction costs, available depth, and the ability to locate a counterparty. Connecting these dimensions requires transaction-level evidence on exchange or independent market data.

This distinction also affects comparisons across product types. A fund designed around subscriptions and redemptions may record limited circulation among existing holders. A transferable commodity token may circulate across several venues and custodial arrangements. Applying a common transfer measure without considering these mechanisms can combine economically different activities. Classification makes their composition visible and provides a basis for more specific comparisons.

Ownership concentration supplies complementary information. Concentrated address holdings can indicate a small direct holder base, substantial intermediation, or both. Reporting the retained supply share and conditional concentration separates these possibilities to the extent permitted by the observed ledger. Recovering beneficial ownership requires additional evidence about pooled positions and common control.

\subsection{Relationship to aggregate RWA evidence}

The event-level results clarify an ambiguity in the aggregate transfer measures reported by \citet{mafrur2026}. That study documents cross-asset differences in monthly activity and
participation but does not identify the economic function of individual transfer events.
The decomposition used here shows how those aggregate measures change when observable
creation-like, destruction-like, and documented issuer-linked movements are separated.

The framework also complements functional transaction analysis \citep{luo2026}. Detailed classification supplies the event categories, while the matched ratios quantify their implications for liquidity proxies. The ownership decomposition addresses a separate issue arising when tokenized claims enter pooled or custodial arrangements.

\subsection{Scope and limitations}

The empirical application covers a selected set of Ethereum contracts observed over aligned
calendar-month windows. It does not describe the full RWA market, so the results are limited to
the measurement behaviour observed in these instruments and should not be generalized to all
tokenized assets.

Data coverage is also affected by transfer intensity. The collection procedure reaches its
250,000-log ceiling for PAXG observations and for XAUT, which makes highly active contracts
more difficult to reconstruct with the current approach. This limits our ability to compare the
sample with heavily transferred commodity tokens and introduces a sample-selection issue
related to activity itself.

Event classification is incomplete when operational addresses are undocumented, address roles
change over time, or economic settlement occurs outside Ethereum. We apply a conservative
documentation rule. An address that appears to perform a distribution, custody, or settlement
function remains in the residual category unless that role is supported by issuer documentation
or the specified external label source. As a result, residual volume can still include
non-trading activity, while some economic trades may occur without a contemporaneous on-chain
transfer. Residual turnover is therefore a classification-dependent measure of observed
activity and should not be treated as a direct estimate of secondary-market trading volume or
as a formal upper or lower bound.

Historical balance reconstruction introduces another limitation. Rebasing, non-standard
issuance mechanisms, sink transfers that do not reduce reported supply, omitted events, and
intermediate reconstruction errors can cause reconstructed balances to diverge from outstanding
supply. Observations that fail the implemented supply reconciliation are excluded from the
empirical analysis.

Address concentration also remains an address-level measure, not a direct measure of beneficial
ownership. Removing an identified pool does not reveal how the balance inside that pool is
distributed across underlying investors, and several retained addresses may still be controlled
by the same entity. Reporting ownership coverage makes this uncertainty visible, but resolving it
would require additional off-chain ownership information.

\section{Conclusion}

Blockchain records make token transfers and address balances observable, but neither maps
directly into the economic concepts that researchers may wish to measure. Transfer volume can
combine issuance, destruction, issuer-related operations, custody movements, and transfers
between other addresses, while address balances can reflect both direct holdings and
intermediated positions. This paper develops a framework for making these measurement choices
explicit.

On the activity side, the framework compares total and residual turnover within the same
asset--window and with the same supply denominator. The empirical application shows that the
difference can be economically large. Across the four August 2026 fund observations,
total-to-residual turnover factors range from 1.5 to 22.1, and the effect varies materially
across months. For OUSG, treating documented redemption-related addresses as issuer-linked
raises the August classification factor from 1.9 to 9.4.

On the ownership side, the framework separates the share of supply represented by retained
addresses from concentration within that retained population. The OUSG example shows why the
distinction matters: declining retained coverage can reduce concentration measured against
total supply even while concentration among the remaining addresses increases.

Residual transfer volume is not a direct measure of secondary-market trading, just as address-level concentration is not a direct measure of beneficial ownership. Their interpretation depends on how transfer events are classified and how much of the token supply is represented in the ownership measure. Empirical studies should make these choices explicit when using on-chain activity or concentration as evidence on market liquidity and investor participation.
\section*{Acknowledgements}

The authors used ChatGPT (GPT-5.6 Sol, OpenAI) to assist with language polishing, grammatical correction, and improving the clarity of selected passages in the manuscript. The tool was not used to generate research data, perform the empirical analysis, or determine the study's findings and conclusions. The authors reviewed and take full responsibility for the final content of the manuscript.

\section*{Data and research materials}

The analysis uses the release of \texttt{rwa-liquidity} (package version 0.1.1), available at \url{https://github.com/atytmr07/rwa-liquidity}. The results correspond to release \texttt{v0.1.1} of the repository. The release includes \texttt{paper/panel.csv}, the Table~\ref{tab:turnover} fragment, a nine-month summary of $X$ and $F$ for every asset, and macros for the figures quoted in the text. The collection command recorded by the repository is
\begin{quote}\small\ttfamily
\texttt{rwa-liquidity paper -{}-first-month 2025-12 -{}-last-month 2026-08 -{}-head 26086416}
\end{quote}
Appendix~\ref{app:contracts} records the focal contracts and classification addresses.

\appendix
\section{Contract identities and empirical conventions}
\label{app:contracts}

The focal Ethereum token contracts are:
\begin{quote}\small
BUIDL: \nolinkurl{0x7712c34205737192402172409a8f7ccef8aa2aec}\\
USYC: \nolinkurl{0x136471a34f6ef19fe571effc1ca711fdb8e49f2b}\\
OUSG: \nolinkurl{0x1b19c19393e2d034d8ff31ff34c81252fcbbee92}\\
FDIT: \nolinkurl{0x48ab4e39ac59f4e88974804b04a991b3a402717f}
\end{quote}
The further contracts in Table~\ref{tab:issuer} are:
\begin{quote}\small
TBILL: \nolinkurl{0xdd50c053c096cb04a3e3362e2b622529ec5f2e8a}\\
USTB: \nolinkurl{0x43415eb6ff9db7e26a15b704e7a3edce97d31c4e}\\
mTBILL: \nolinkurl{0xdd629e5241cbc5919847783e6c96b2de4754e438}\\
CANA: \nolinkurl{0x01995a697752266d8e748738aaa3f06464b8350b}
\end{quote}
The OUSG retained-ownership calculation excludes the fOUSG contract \\
\nolinkurl{0x1dd7950c266fb1be96180a8fdb0591f70200e018}.
The issuer-linked addresses, with the source that documents each, are:
\begin{quote}\small
OUSG \citep{ondoaddresses}: instant manager \nolinkurl{0x93358db73b6cd4b98d89c8f5f230e81a95c2643a}; legacy instant manager \nolinkurl{0x2826989983e3a66f0622132d019c2ae173eb6a43}; OUSG recipient for manual redemptions \nolinkurl{0x72be8c14b7564f7a61ba2f6b7e50d18dc1d4b63d}; legacy managers \nolinkurl{0xf16c188c2d411627d39655a60409ec6707d3d5e8} and \nolinkurl{0x3501883a646f1f8417bcb62162372550954d618f}.\\
USYC \citep{circleusyc}: Teller \nolinkurl{0xee35f963bfc71b51ec95147f26c030d674ea30e6}; \\ Cross Chain Teller \nolinkurl{0x5dbeccecebcdc2ce3258f6e638373d2923560c7d}; 
earlier Teller \nolinkurl{0x5c73e1cfdd85b7f1d608f7f7736fc8c653513b7a} and \\ Cross Chain Teller \nolinkurl{0x5575a88bec5b47ce8d270f6a4f2418865f16afd5}, listed on the same page in September 2025 and January 2026 (Internet Archive).\\
USTB \citep{superstate}: RedemptionIdle \nolinkurl{0x4c21b7577c8fe8b0b0669165ee7c8f67fa1454cf}.\\
mTBILL \citep{midas}: issuance vault \nolinkurl{0x99361435420711723af805f08187c9e6bf796683}; instant redemption vault \nolinkurl{0x569d7dccbf6923350521ecbc28a555a500c4f0ec}; mEDGE, mMEV, and mBASIS redemption vaults \nolinkurl{0x9b2c5e30e3b1f6369fc746a1c1e47277396af15d}, \nolinkurl{0xac14a14f578c143625fc8f54218911e8f634184d}, and \nolinkurl{0x0d89c1c4799353f3805a3e6c4e1cbbb83217d123}.\\
TBILL: the token contract itself.\\
CANA: \nolinkurl{0xccadea5cc24204995a98daa056c37bd5207fd0c5}, labelled ``Maseer: Deployer'' on Etherscan.
\end{quote}
No documented or labelled issuer address was found for BUIDL or FDIT. The designated event endpoints are the zero address and
\nolinkurl{0x000000000000000000000000000000000000dead}.

Each transfer is counted once by log identity: a transaction containing multiple transfer events may contribute several observations. All transfer quantities are denominated in the token's native units. Closing supply is reconstructed immediately before the exclusive window endpoint.

\end{document}